\newtheorem{theorem}{Theorem}
\newtheorem{definition}[theorem]{Definition}
\newtheorem{lemma}[theorem]{Lemma}
\newtheorem{proposition}[theorem]{Proposition}
\newtheorem{corollary}[theorem]{Corollary}
\DeclarePairedDelimiterX{\set}[1]{\{}{\}}{\setargs{#1}}
\NewDocumentCommand{\setargs}{>{\SplitArgument{1}{;}}m}
{\setargsaux#1}
\NewDocumentCommand{\setargsaux}{mm}
{\IfNoValueTF{#2}{#1} {#1\nonscript\:\delimsize\vert\allowbreak\nonscript\:\mathopen{}#2}}
\DeclarePairedDelimiterX{\mset}[1]{\llparenthesis}{\rrparenthesis}{\msetargs{#1}}
\NewDocumentCommand{\msetargs}{>{\SplitArgument{1}{;}}m}
{\msetargsaux#1}
\NewDocumentCommand{\msetargsaux}{mm}
{\IfNoValueTF{#2}{#1} {#1\nonscript\:\delimsize\vert\allowbreak\nonscript\:\mathopen{}#2}}
\theoremstyle{plain}
\newcommand{\N}{\mathbb{N}}
\newcommand{\lcard}{\textsf{LIA}^{card}}
\def\i{n}
\title{Combinatory Array Logic with Sums}
\author{
Rodrigo Raya\inst{1}\thanks{Research supported by the Swiss NSF Project P500PT\_222338}
}
\institute{
  Max-Planck Institute for Software Systems,
  Kaiserslautern, Germany\\
  \email{rraya@mpi-sws.org}
 }
\authorrunning{R. Raya}
\titlerunning{Combinatory Array Logic with Sums}
\begin{document}

\maketitle

\begin{abstract}
We prove an $\NP$ upper bound on a theory of integer-indexed integer-valued arrays that extends combinatory array logic with the ability to express sums of elements.
The decision procedure that we give is based on observations obtained from our analysis of the theory of power structures. 
\end{abstract}


\section{Introduction}
\label{section:intro}
Many applications of computer science to operations research and software engineering require some form of constraint solving technology. We focus in the satisfiability modulo theories (SMT) framework which was intensively developed in the first decade of the century, leveraging progress in the architecture of propositional satisfiability solvers \cite{moskewicz_chaff_2001,  nieuwenhuis_solving_2006, de_moura_z3_2008}. 

SMT addresses the satisfiability problem of fragments of first-order theories that are quantifier-free or have a small number of quantifier alternations. In fortunate occasions, this restriction makes the satisfiability problem $\NP$-complete. In such cases, it is possible to reduce the satisfiability problem of the fragments to the satisfiability problem of propositional logic in polynomial time. Some theories supported using such reduction include real numbers, integers, lists, arrays, bit vectors, and strings \cite{bradley_calculus_2007, kroening_decision_2016}.

This work analyses the structure of a well-known fragment of the quantifer-free theory of arrays. In the SMT framework, arrays are conceived as indexed homogeneous collections of elements from some fixed domain. This is in contrast to other data-structures, like lists, which can only be accessed with recursive operators. The popularity of arrays stems from the fact that they can be used to model many abstractions useful in applications such as programming 
\cite{daca_array_2016, wang_solver_2023}, databases \cite{gianola_verification_2023, ding_proving_2023}, model checkers \cite{ghilardi_mcmt_2010}, memory models \cite{conchon_parameterized_2020}  or quantum circuits \cite{chen_theory_2023}.

Several theories of arrays in the literature express essentially the same concepts under different syntactic appearances \cite{stump_decision_2001, de_moura_generalized_2009, gleissenthall_cardinalities_2016, alberti_cardinality_2017}. As a consequence, a systematic classification of these theories is becoming increasingly difficult. This results in duplicated engineering efforts. It has been argued \cite[Lecture~19]{meseguer_lecture_2017} that some of these redundancies could be avoided by adopting a semantic perspective on the study of SMT theories. 

Our results show that the semantic approach is fruitful in the area of decision procedures for theories of arrays. We demonstrated in \cite{raya_vmcai_2022, raya_algebraic_2024} how, by fixing a model of such theories, we are able to reconstruct and extend the celebrated combinatory array logic fragment \cite{bradley_whats_2006}. In this paper, we further show how these observations can be extended to support summation constraints. Our methodology is inspired in the model theory of power structures \cite{mostowski_direct_1952, feferman_first_1959}, which we adapt from the first-order to the quantifier-free setting, which is the one relevant for applications to SMT.

\section{First-order model theory}
\label{section:prelim}
We start reviewing some notions from first-order model theory.

A \textbf{first-order language} is one whose logical symbols are $\lnot, \land, \lor, \forall$ and $\exists$, whose terms are either variables, constants or function symbols applied to terms and whose formulas are either atomic (relation symbols applied to terms) or general (atomic formulas and inductively, from formulas $A,B$, we get new formulas $\lnot A, A \land B$ and $A \lor B$ and from a formula $A$ and a variable symbol $x$ we get the new formulas $\exists x. A$ and $\forall x. A$).

A variable in a formula is free if there is no occurrence of a quantifier binding the variable name on the path of the syntax tree of the formula reaching the occurrence of the variable. A formula without free variables is a \textbf{sentence}. A \textbf{first-order theory} is a set of sentences written in some first-order language. 

A \textbf{first-order structure} $\mathcal{A}$ over a first-order language $L$ is a tuple with four components: a set $A$ called the domain of A; a set of elements of $A$ corresponding to the constant symbols of $L$; for each positive integer $n$, a set of $n$-ary relations on
$A$ (i.e. subsets of $A^n$), each of which is named by one or more $n$-ary relation symbols of $L$ and for each positive integer $n$, a set of $n$-ary operations on $A$ (i.e maps from $A^n$ to $A$), each of
which is named by one or more $n$-ary function symbols of $L$. The mapping assigning each first-order symbol of $L$ to its corresponding interpretation in $\mathcal{A}$ is denoted  $\cdot^{\mathcal{A}}$. This function is extended to work on terms, i.e. the application of function symbols to constants, variables or other terms, by requiring that $(f(s_1,\ldots,s_n))^{\mathcal{A}} = f^{\mathcal{A}}(s_1^{\mathcal{A}}, \ldots,s_n^{\mathcal{A}})$.

Let $\phi$ be a sentence in a first-order language $L$ and let $\cdot^{\mathcal{A}}$ be an interpretation of the symbols of $L$ in the structure $\mathcal{A}$. The \textbf{sentence $\phi$ is satisfied in the structure $\mathcal{A}$}, written $\mathcal{A} \models \phi$, if the following conditions apply.

\begin{itemize}
    \item[-] If $\phi$ is the atomic sentence $R(s_1,\ldots,s_n)$ where $s_1,\ldots,s_n$ are terms of $L$ then $\mathcal{A} \models \phi$ if and only if $(s_1^{\mathcal{A}},\ldots,s_n^{\mathcal{A}}) \in R^{\mathcal{A}}$.
    \item[-] $\mathcal{A} \models \lnot \phi$ if and only if it is not true that $\mathcal{A} \models \phi$. 
    \item[-] $\mathcal{A} \models \phi_1 \land \phi_2$ if and only if $\mathcal{A} \models \phi_1$ and $\mathcal{A} \models \phi_2$.
    \item[-] $\mathcal{A} \models \phi_1 \lor \phi_2$ if and only if $\mathcal{A} \models \phi_1$ or $\mathcal{A} \models \phi_2$. 
    \item[-] If $\phi$ is the sentence $\forall y. \psi(y)$ then $\mathcal{A} \models \phi$ if and only if for all elements $b$ of $A$, $\mathcal{A} \models \psi(b)$. 
    \item[-] If $\phi$ is the sentence $\exists y. \psi(y)$ then $\mathcal{A} \models \phi$ if and only if there is at least one element $b$ of $A$ such that $\mathcal{A} \models \psi(b)$. 
\end{itemize}

Let $Ax$ be a set of first-order sentences. We define the relation \textbf{$Ax \models \phi$} which holds if and only if for every structure $\mathcal{A}$, if $\mathcal{A} \models ax$ for each sentence $ax \in Ax$ then $\mathcal{A} \models \phi$. 

The \textbf{axiomatic theory} defined by a set of axioms $Ax$ is $Th(Ax) = \{ \phi | Ax \models \phi \}$. 

The \textbf{semantic theory} of a structure $\mathcal{A}$ is the set $Th(\mathcal{A}) := \{ \phi \, | \, \mathcal{A} \models \phi \}$.

When studying the sets $Th(\mathcal{A})$ and $Th(Ax)$ we may assume the sentences are in prenex normal form. A \textbf{prenex normal form} of a first-order formula $F$ is a first-order formula consisting of a string of quantifiers (called the prefix of the formula) followed by a quantifier-free formula (known as the matrix of the formula) which is equivalent to $F$. It is well-known that there is a polynomial time algorithm transforming sentences of a first-order theory into to equivalent sentences in prenex normal form.

The \textbf{existential fragment of the first-order theory $T$}, denoted $Th_{\exists^*}(T)$,is the subset of sentences in $T$ whose prefix in prenex normal form is purely existential. We write $Th_{\exists^*}(Ax)$ if $T$ is axiomatically specified and $Th_{\exists^*}(\mathcal{A})$ if $T$ is semantically specified.

\section{Array theories}
\label{section:arrays}
The theory of arrays $T_A$ is defined as a first-order theory with three sorts: $A$ for arrays, $I$ for indices and $E$ for elements of arrays. It has one ``read" function symbol $\cdot [ \cdot ]: A \times I \to E$, one ``write" function symbol $\cdot \langle \cdot \triangleleft \cdot \rangle: A \times I \times E \to A$ and includes the equality relation symbol $\cdot = \cdot$ for indices and elements. The theory is described axiomatically as the sets of sentences satisfying axioms $Ax$ of the following form \cite{bradley_calculus_2007}. $=$ is axiomatised as a 
reflexive, symmetric and transitive relation. Array read is assumed to be a congruence relation, i.e. $\forall a,i,j. i = j \to a[i] = a[j]$. Finally, there are axioms relating the read and write operations $\forall a,v,i,j. i = j \to a \langle i \triangleleft v\rangle[j] = v$ and $\forall a,v,i,j. i \neq j \to a \langle i \triangleleft v\rangle[j] = a[j]$.

The quantifier-free fragment of $T_A$ is the set of formulas that can be written without any use of quantifiers. Our goal is to decide which quantifier-free formulas are satisfiable. The satisfiable quantifier-free formulas correspond precisely to set of formulas in $Th_{\exists^*}(Ax)$.
\begin{proposition}
The existential closure of the satisfiable formulas in the quantifier-free fragment of $T_A$ is the set $Th_{\exists^*}(Ax)$. Conversely, if we drop the existential prefixes in $Th_{\exists^*}(Ax)$, we obtain the satisfiable formulas of the quantifier-free fragment of $T_A$.
\end{proposition}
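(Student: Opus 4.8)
The plan is to prove the two directions separately, in each case by unwinding the semantics of the existential quantifier from Section~\ref{section:prelim}; this is essentially a bookkeeping statement, so I do not anticipate a real mathematical obstacle, only care with conventions. The two facts I would use are the clause defining $\mathcal{A}\models\exists y.\,\psi(y)$, iterated over a finite block of existential quantifiers, and the observation that a finite block of existential quantifiers prefixed to a quantifier-free matrix is already a prenex normal form with a purely existential prefix.

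For the first direction I would start from a satisfiable formula $\phi(x_1,\dots,x_n)$ of the quantifier-free fragment, witnessed by a structure $\mathcal{A}$ satisfying every axiom of $Ax$ together with a sort-respecting tuple $b_1,\dots,b_n$ from its domain with $\mathcal{A}\models\phi(b_1,\dots,b_n)$. Applying the existential clause $n$ times gives $\mathcal{A}\models\exists x_1\cdots\exists x_n.\,\phi$; since $\phi$ is quantifier-free this sentence is its own prenex normal form with purely existential prefix, and as it holds in a model of $Ax$ its existential closure $\exists x_1\cdots\exists x_n.\,\phi$ belongs to $Th_{\exists^*}(Ax)$. For the converse I would take a sentence of $Th_{\exists^*}(Ax)$, pass (using the polynomial-time prenexing recalled earlier) to a prenex normal form $\exists x_1\cdots\exists x_n.\,\phi$ with $\phi$ quantifier-free, and drop the prefix to obtain $\phi$. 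Since $Ax$ has models --- e.g.\ take any nonempty sets for the index and element sorts and all functions between them for the array sort --- there is a structure $\mathcal{A}\models Ax$ with $\mathcal{A}\models\exists x_1\cdots\exists x_n.\,\phi$, and unwinding the existential clause extracts witnesses $b_1,\dots,b_n$ with $\mathcal{A}\models\phi(b_1,\dots,b_n)$, so $\phi$ is satisfiable.

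The one point I would treat carefully is the phrase ``drop the existential prefix'': a sentence of $Th_{\exists^*}(Ax)$ need not itself be written in prenex form, so this should be read as ``move to an existentially-prefixed prenex normal form and then delete the prefix'', which alters the formula only up to logical equivalence --- and logical equivalence is all that satisfiability depends on. Relatedly, I use the membership $\exists x_1\cdots\exists x_n.\,\phi\in Th_{\exists^*}(Ax)$ in the sense that $Ax\cup\{\exists x_1\cdots\exists x_n.\,\phi\}$ has a model; for the semantically specified variant $Th_{\exists^*}(\mathcal{A})$ of Section~\ref{section:prelim} this is literally truth in $\mathcal{A}$. I expect this lining-up of conventions, rather than any genuine difficulty, to be the substance of the argument.
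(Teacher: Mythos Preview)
Your proposal is correct and follows the same approach as the paper: both treat the claim as an immediate consequence of unwinding the definitions, with the paper's own proof being even terser (two sentences, essentially asserting each direction holds ``by definition''). The convention issue you flag at the end---reading membership in $Th_{\exists^*}(Ax)$ as consistency with $Ax$ rather than the literal $Ax\models\cdot$---is precisely the point the paper glosses over, so your treatment is if anything more careful than the original.
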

\begin{proof}
The existential closure of a satisfiable formula in the quantifier-free fragment of $T_A$ is, by definition, in $Th_{\exists^*}(Ax)$. A formula in $Th_{\exists^*}(Ax)$ is true by definition. Converting it to prenex normal form and dropping the existential quantifier prefix leaves a formula of the quantifier-free fragment of $T_A$.
\end{proof}

Many works, starting with \cite{stump_decision_2001}, consider an extension of the theory $T_A$ with axioms of the form $R(a_1,\ldots,a_n) \leftrightarrow \forall i. R(a_1(i),\ldots,a_n(i))$ which says that some relation holds on a tuple of array variables $a_1,\ldots,a_n$ if and only it holds at each component. One example is the extensionality axiom $\forall a,b. a = b \leftrightarrow (\forall i. a[i] = b[i])$. In \cite{raya_vmcai_2022}, we observed that several fragments extending combinatory array logic \cite{de_moura_generalized_2009} can be described semantically as the theory of a power structure \cite{mostowski_direct_1952}. More precisely, we showed the following results. 

\begin{definition}
The generalised power $\mathcal{P}(\mathcal{A},I)$ of the combinatory array logic fragment is a structure whose carrier set is the set $M^I$ of functions from the index set $I$ to the carrier set of the structure of the array elements $M$ and whose relations are interpreted as sets of the form 
\[
\{  (a_1,\ldots,a_n) \in M^I | \Phi(S_1,\ldots,S_k) \}
\]
where $\Phi$ is a Boolean algebra expression over $\mathcal{P}(I)$ using the symbols $\subseteq$, $\cup$, $\cap$ or $\cdot^c$ and each set variable $S$ is interpreted as $S = \{ i \in I | \theta(a_1(i),\ldots,a_n(i)) \}$ where $\theta$ is a formula in the theory of the elements.
\end{definition}

\begin{theorem}
The quantifier-free formulas of combinatory array logic can be encoded in polynomial time as sentences in the theory of the generalised power $\mathcal{P}(\mathcal{A},I)$ in a way that preserves satisfiability of the formulas. 
\end{theorem}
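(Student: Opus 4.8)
The plan is to exhibit an explicit translation from combinatory array logic (CAL) formulas to sentences over the generalised power $\mathcal{P}(\mathcal{A},I)$ and to argue it is satisfiability-preserving and computable in polynomial time. First I would fix a satisfiable quantifier-free CAL formula $\varphi$ and normalise it: flatten nested reads and writes by introducing fresh array/element variables (Ackermannisation of the write terms), so that every atom is one of $a[i] = e$, $b = a\langle i \triangleleft v\rangle$, $a = b$, index (dis)equalities, element (dis)equalities, plus whatever the base element theory contributes. This normalisation is the standard one and is clearly polynomial; the point of doing it first is that after it, the only "array-shaped" atoms are equalities between array variables and single-write equalities, and these are exactly the shapes the definition of $\mathcal{P}(\mathcal{A},I)$ is built to capture.

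Next I would translate atom by atom into the Boolean-algebra-over-$\mathcal{P}(I)$ language of the generalised power. An equality $a = b$ becomes the relation $\{(a,b) \mid S_{a=b} = I\}$ where $S_{a=b} = \{i \mid a(i) = b(i)\}$, i.e. a $\Phi$ asserting a set-variable equals the top element; extensionality is thus absorbed into the semantics of the power structure rather than being an extra axiom. A write equality $b = a\langle i \triangleleft v\rangle$ is handled by the read-over-write axioms: at index $i$ the value is $v$, elsewhere it agrees with $a$; so I introduce the index $i$ as (the name of) a singleton set variable and translate $b = a\langle i \triangleleft v\rangle$ to the conjunction of "$b(i) = v$" and "$b$ and $a$ agree off $\{i\}$", both expressible as $\Phi$-constraints $S_{b=v} \supseteq \{i\}$ and $S_{a=b} \supseteq \{i\}^c$ using the element-theory formulas $\theta$ for pointwise equality to a constant. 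Index and element literals that do not mention arrays are passed through to the element/index theory unchanged. Collecting these, $\varphi$ maps to a single sentence $\widehat{\varphi}$ of $\mathcal{P}(\mathcal{A},I)$ whose matrix is a Boolean combination of $\Phi$-atoms; existentially quantifying the array variables (and the singleton index-sets) gives a sentence in $Th_{\exists^*}(\mathcal{P}(\mathcal{A},I))$.

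Then I would prove the two directions of satisfiability equivalence. If $\varphi$ is satisfied by arrays $a_1,\dots,a_n$ over element structure $\mathcal{A}$ and index set $I$, those same functions $I \to M$ witness $\widehat{\varphi}$ in $\mathcal{P}(\mathcal{A},I)$, because each translated atom was set up to hold exactly when the corresponding array axiom holds pointwise — this is a routine unwinding of the read/write axioms and the congruence axiom against the definition of the relations of $\mathcal{P}(\mathcal{A},I)$. Conversely, a model of $\widehat{\varphi}$ is literally a choice of functions in $M^I$ satisfying all the pointwise constraints, which reassembles into array values satisfying $\varphi$; here one uses the earlier Proposition to move between "satisfiable quantifier-free formula" and "true existential sentence". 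Finally, a size count: each normalisation step and each atom-translation adds only a constant (or linear) number of new symbols and $\Phi$-atoms, so $|\widehat{\varphi}|$ is polynomial in $|\varphi|$ and the translation runs in polynomial time.

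The main obstacle I expect is not the logical equivalence — which is essentially bookkeeping against the axioms — but making the encoding of writes faithful in the presence of the restricted $\Phi$-language: one must be careful that "agreement off a singleton" and "value $v$ at a singleton" are both expressible as Boolean-algebra constraints with the permitted set-builders $\{i \mid \theta(\dots)\}$, and that index equalities among the write-indices $i_1,\dots,i_m$ are respected (two syntactically different write indices may denote the same element of $I$), which forces a case split — or, better, an explicit treatment of the singletons as atoms of a finite sub-Boolean-algebra of $\mathcal{P}(I)$ generated by the named indices. Handling that sub-algebra uniformly, rather than by an exponential case split, is the delicate point, and it is exactly where the power-structure viewpoint pays off, since the Boolean combination $\Phi$ is allowed to range over $\mathcal{P}(I)$ and thus can name these atoms directly.
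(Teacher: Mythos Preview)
The paper does not give its own proof of this theorem: it is stated in Section~\ref{section:arrays} as a result established in the authors' earlier work \cite{raya_vmcai_2022, raya_algebraic_2024} and quoted here purely as background. There is therefore no in-paper argument against which to compare your proposal.

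On its own merits, your sketch follows the expected shape of the encoding in the cited sources: eliminate writes via the read-over-write axioms, then render the residual pointwise constraints as Boolean-algebra conditions on index sets of the form $\{i \in I \mid \theta(\overline{a}(i))\}$. The point you flag as delicate is genuine: the Definition of $\mathcal{P}(\mathcal{A},I)$ as reproduced in this paper only admits set variables defined by element-theory formulas $\theta$ applied to array components, so a singleton $\{i\}$ for a named index term is not obviously available, and your closing claim that ``$\Phi$ is allowed to range over $\mathcal{P}(I)$ and thus can name these atoms directly'' reads more into the definition than it states. In the full treatment of \cite{raya_vmcai_2022, raya_algebraic_2024} the index sort and named index constants are handled explicitly as part of the signature of the power structure, which is how the singletons enter without an exponential case split; to make your argument self-contained you would need to spell out that extension rather than assume it.
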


\begin{theorem}
The theory $Th_{\exists^*}(\mathcal{P}(\mathcal{A},I))$ can be decided in $\NP$ even when the algebra of indices $\mathcal{P}(I)$ includes a cardinality operator and the language includes linear arithmetic constraints on the cardinality constraints. 
\end{theorem}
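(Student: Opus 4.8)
The plan is to imitate the Feferman--Vaught decomposition in the quantifier-free setting: to reduce an existential sentence about the power structure $\mathcal{P}(\mathcal{A},I)$ to the conjunction of (a) a purely arithmetic problem about the cardinalities of the Venn regions that the element formulas cut out in $I$, and (b) finitely many realizability queries in the existential theory of the elements $\mathcal{A}$; and then to keep the certificate polynomial by exploiting the sparse-solution phenomenon familiar from quantifier-free Boolean algebra with Presburger arithmetic.

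First I would \emph{decompose}. A sentence of $Th_{\exists^*}(\mathcal{P}(\mathcal{A},I))$ enriched with a cardinality operator and linear arithmetic has the form $\exists a_1,\dots,a_p\,\exists z_1,\dots,z_\ell.\ \psi$, where the $a_t$ range over $M^I$, the $z_r$ over $\mathbb{Z}$, and $\psi$ is a Boolean combination of power-structure atoms $\Phi(S_1,\dots,S_k)$ and of linear-arithmetic atoms built from the $z_r$ and from cardinality terms $|T|$, with $T$ a Boolean combination of the set expressions $S_j$. Collect the element formulas $\theta_1,\dots,\theta_m$ that occur --- over all atoms and all array variables --- in the definitions $S_j=\{\, i\in I \mid \theta_j(a_1(i),\dots,a_p(i)) \,\}$. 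For $w\in\{0,1\}^m$ let $R_w=\{\, i\in I \mid \theta_j(a(i))\text{ holds iff }w_j=1\text{ for all }j \,\}$ be the associated Venn region and put $x_w:=|R_w|$. Then each $S_j$, hence each Boolean combination $T$, is a union of Venn regions, so $|T|$ rewrites to a sum $\sum_{w\in W_T}x_w$; every inclusion or equality between Boolean combinations rewrites to the assertion that certain $x_w$ vanish; and a power-structure atom $\Phi(S_1,\dots,S_k)$ unfolds into a Boolean combination of atoms $x_w=0$ and $x_w\ge 1$. Consequently the array sentence is satisfiable if and only if there is a non-negative integer assignment to $\{x_w\}_{w\in\{0,1\}^m}\cup\{z_r\}$ satisfying $\sum_w x_w=|I|$, satisfying the Boolean combination of arithmetic atoms obtained from $\psi$, and such that for every $w$ with $x_w\ge 1$ the element type $\tau_w:=\bigwedge_{w_j=1}\theta_j\wedge\bigwedge_{w_j=0}\neg\theta_j$ is realizable, i.e. $\mathcal{A}\models\exists e_1,\dots,e_p.\ \tau_w(e_1,\dots,e_p)$, a query in $Th_{\exists^*}(\mathcal{A})$. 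The ``only if'' direction reads off the region sizes and a type from a model; the ``if'' direction rebuilds arrays by colouring $I$ so that each $R_w$ receives $x_w$ indices, all carrying a realizer of $\tau_w$ (and, when $|I|$ or some region is infinite, treating cardinalities symbolically in $\N\cup\{\infty\}$ as is done for Boolean algebra with Presburger arithmetic over arbitrary structures).

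Then I would argue \emph{membership in} $\NP$. The obstacle is that $\{0,1\}^m$ carries exponentially many variables $x_w$, so a full assignment is not a polynomial certificate. I would circumvent this with the sparse-solution property underlying quantifier-free Boolean algebra with Presburger arithmetic: if the combined system is satisfiable, then it has a solution in which only polynomially many $x_w$ are non-zero, the bound depending on the number and bit-length of the arithmetic atoms extracted from $\psi$ but not on $m$. The $\NP$ procedure is therefore: guess a valuation of the atoms of $\psi$ making its propositional skeleton true (the only propositional branching); guess a polynomially large set $D\subseteq\{0,1\}^m$ of Venn regions that are to be non-empty, necessarily containing the polynomially many regions that the chosen valuation forces non-empty; for each $w\in D$ guess and verify a realizer of $\tau_w$ using the $\NP$ decision procedure for $Th_{\exists^*}(\mathcal{A})$ (available for the element theories relevant here, such as pure equality or linear integer arithmetic); set $x_w=0$ for $w\notin D$; and check in $\NP$ that the resulting polynomially sized integer linear system --- the chosen arithmetic atoms together with $x_w\ge 1$ for $w\in D$, $x_w=0$ otherwise, and $\sum_w x_w=|I|$ --- has a non-negative integer solution. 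Soundness follows from the decomposition, completeness from the sparse-solution bound together with completeness of the two component procedures.

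I expect the main work to lie in two places. The first is establishing the sparse-solution bound in the present setting, where the integer program mixes Venn-region cardinalities with the extra linear atoms on cardinalities and with the integer variables $z_r$, and checking that the bound --- and the bit-sizes involved --- stay polynomial; this is the step that makes the difference between a decision procedure and an $\NP$ one. The second, more technical than deep, is handling $|I|$ uniformly --- in particular a possibly infinite index set, or one whose size is itself a variable --- so that the symbolic arithmetic remains sound and the realizability side conditions stay correctly coupled to the regions declared non-empty.
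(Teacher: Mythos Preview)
The paper does not actually prove this theorem: it is quoted from the author's earlier work \cite{raya_vmcai_2022, raya_algebraic_2024} as background, and the present paper's new contribution begins with the summation extension in Definition~\ref{def:sums}. So there is no in-paper proof to compare your attempt against line by line.

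That said, your plan is correct and is essentially the argument used in the cited prior work: a Feferman--Vaught style decomposition into Venn regions determined by the element formulas $\theta_j$, translation of the Boolean-algebra-with-cardinalities part into a quantifier-free Presburger system over the region sizes $x_w$, per-region realizability checks in $Th_{\exists^*}(\mathcal{A})$, and a sparse-solution bound to keep the certificate polynomial despite the $2^m$ regions. Your identification of the two delicate points (the polynomial sparse-solution bound for the mixed system, and the uniform treatment of a possibly infinite $|I|$) is accurate.

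It is worth contrasting this with the machinery the present paper builds for the harder summation fragment, since specialising that machinery to the cardinality-only case gives a somewhat different-looking proof. Instead of naming Venn regions, the paper introduces $\{0,1\}$-valued characteristic arrays $c_j$ for the set variables, rewrites Boolean-algebra atoms as set interpretations over these arrays, and expresses each cardinality $|S|$ as the sum of an if-then-else array; the whole formula is then pushed into the fragment $\mathsf{LIA}^{card}$, whose $\NP$ membership is obtained via semilinear normal forms (Lemma~\ref{lem:semnf}), elimination of the star/exponent operator (Lemmas~\ref{lem:starelim} and~\ref{lem:polsum}), and Papadimitriou's small-solution bound for integer programming. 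Your Venn-region route is more direct and self-contained for the cardinality-only statement; the paper's sum-based encoding is engineered so that the same pipeline handles the summation extension uniformly. Both arguments ultimately bottom out in the same small-model phenomenon for integer linear systems.
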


Our goal in this note is to generalise this result to summation constraints over the array (function symbols) variables. Interestingly, to preserve decidability one has to disallow constants in the element theory specifications $\theta$. 

\section{Decision Procedure}
\label{section:algebraic}
Our first step is defining the input language to be decided.

\begin{definition} \label{def:sums}
The theory of generalised powers with sums consists of formulae of the form
\begin{align} \label{eq:fragment}
\begin{split}
F(S_1,\ldots,S_k, \overline{\sigma}) \land \bigwedge_{i = 1}^k S_i = \{ \i \in I | \varphi_i(\overline{c}(\i)) \} \land \overline{\sigma} = \sum \mset{ \overline{c}(\i) ; \varphi_0(\overline{c}(\i))  }
\end{split}
 \end{align}
where $F$ is a formula from Boolean algebra of sets, $\varphi_0,\ldots,\varphi_k$ are formulae in the existential fragment of Presburger arithmetic and $\overline{c}$ is a tuple of arrays of natural numbers. We will refer to the first conjunct of this formula as the Boolean algebra term, to the second conjunct as the set interpretations and to the third conjunct as the multiset interpretations.
\end{definition}

There are some differences between Definition~\ref{def:sums} and \cite[Definition~2.1]{piskac_thesis_2011}.   \cite[Definition~2.1]{piskac_thesis_2011} has a quantifier-free Presburger arithmetic formula instead of the Boolean algebra term $F$. Second, the term $\forall e. F$ corresponds to our set interpretations. Third, the term $(u_1,\ldots,u_n) = \sum_{e \in E} (t_1,\ldots,t_n)$ corresponds to our multiset interpretation. It should be noted that the indices in our setting range over the natural numbers and not over a finite set $E$ as in \cite{piskac_thesis_2011}. 

An important observation is that the definition does not allow free variables to be shared between the three conjuncts. In fact, if we allowed such shared constants, the resulting fragment would have an undecidable satisfiability problem. 

\begin{corollary}
The satisfiability of formulas of the form
\begin{align} \label{eq:undecidable}
F(S_1,\ldots,S_k, \overline{\sigma}, \overline{f}) \land \bigwedge_{i = 1}^k S_i = \{ \i \in I | \varphi_i(\overline{c}(\i), \overline{f}) \} \land \overline{\sigma} = \sum \mset{ \overline{c}(\i) ; \varphi_0(\overline{c}(\i), \overline{f})  }
 \end{align}
 is undecidable.
\end{corollary}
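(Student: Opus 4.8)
The plan is to reduce a known undecidable problem—Hilbert's tenth problem, or equivalently the satisfiability of systems of Diophantine equations over the naturals—to the satisfiability of formulas of the form \eqref{eq:undecidable}. The key insight is that once free variables $\overline{f}$ are shared across the set interpretations, the multiset (sum) interpretation, and the Boolean algebra constraint, we regain the ability to express multiplication, which is exactly what the restriction in Definition~\ref{def:sums} was designed to prevent. So the first step is to recall that a single shared free variable, say $f$, ranging over $\N$, together with a summation over an array $c$ whose value pattern is controlled by $f$, lets us encode a product: intuitively, $\sum \mset{ c(\i) ; \varphi_0(c(\i), f) }$ can be made to equal $f \cdot g$ by arranging that the multiset $\mset{ c(\i) ; \varphi_0(c(\i), f)}$ consists of $g$ copies of the value $f$ (or $f$ copies of $g$), the count being pinned down by a cardinality/set constraint that also mentions $f$ and $g$.

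Concretely, I would proceed as follows. Given a Diophantine system $p_1(\overline{x}) = 0, \ldots, p_m(\overline{x}) = 0$ over $\N$, first bring it to a standard form using only equations $x_i + x_j = x_k$, $x_i \cdot x_j = x_k$, and constants, introducing auxiliary variables as needed; this is the usual flattening and is polynomial. The additions and constants translate directly into the Boolean-algebra/Presburger part $F$ once we have each variable available as (the cardinality of) a set or as a component of $\overline{\sigma}$. For each multiplication constraint $x_i \cdot x_j = x_k$, I would introduce a dedicated array $c_{ij}$ and a dedicated portion of the index set, and write a set interpretation $S_{ij} = \{ \i \in I \mid \varphi_{ij}(c_{ij}(\i), \overline{f}) \}$ together with a summand in the multiset interpretation, choosing $\varphi_{ij}$ and $\varphi_0$ so that: (a) on the relevant index block, $c_{ij}$ takes the value $x_i$ exactly $x_j$ many times (and $0$ elsewhere), enforced by making $|S_{ij}|$ equal to $x_j$ and $\varphi_{ij}$ assert $c_{ij}(\i) = x_i$; and (b) the corresponding component of $\overline{\sigma}$ equals $\sum$ of these values, hence $x_i \cdot x_j$, which $F$ then equates to $x_k$. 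Here $x_i, x_j, x_k$ are among the shared free variables $\overline{f}$, and the cardinality operator on $\mathcal{P}(I)$ plus linear arithmetic on cardinalities (available by the cited theorem) is what lets $F$ state $|S_{ij}| = x_j$ and tie everything together.

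The main obstacle, and the place that needs genuine care rather than routine bookkeeping, is making the "$x_j$ copies of $x_i$" gadget actually force multiplication rather than merely permit it: one must ensure that the only satisfying assignments to the arrays $c_{ij}$ are the intended ones, so that the sum cannot be inflated or deflated by the solver's freedom in choosing array values off the designated block, or by overlap between the index blocks used for different products. This is handled by having each set interpretation pin down both the value of $c_{ij}$ on $S_{ij}$ and the cardinality of $S_{ij}$, and by using disjoint index regions (expressible in the Boolean algebra of $\mathcal{P}(I)$) for distinct gadgets; one also has to check that the requirement "$\varphi_0$ contains no constants" in Definition~\ref{def:sums} is precisely what we are violating—our $\varphi_0$ or $\varphi_{ij}$ will mention the shared variable $x_i$, which plays the role of a "constant" that the unrestricted fragment now allows. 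Once the gadget is shown to be faithful, the reduction is completed by observing that the whole construction is polynomial in the size of the Diophantine system and that the resulting formula is satisfiable if and only if the system has a solution over $\N$; undecidability of the latter (Matiyasevich's theorem) then gives undecidability of \eqref{eq:undecidable}.
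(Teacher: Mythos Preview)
Your proposal is correct and follows essentially the same approach as the paper: reduce from Hilbert's tenth problem by encoding multiplication $z = x \cdot y$ via an array that takes the shared value $x$ on exactly $y$ indices (enforced through a cardinality constraint), so that the sum equals $z$, while addition is handled directly in $F$ via Presburger arithmetic. The paper's proof is a terse three-sentence sketch of exactly this idea; your version adds the flattening step and the bookkeeping about disjoint index blocks and faithfulness of the gadget, but the core reduction is the same.
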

\begin{proof}
By reduction from Hilbert's tenth problem \cite{matiyasevich_hilberts_1993}. One can encode in this theory the addition of two natural numbers using the formula $F$ which is in Boolean algebra of sets with cardinalities and thus includes quantifier-free Presburger arithmetic. Multiplication $z = x y$ can be encoded by imposing the array $\overline{c}$ to be equal to the constant $x$ in each position, have length $y$ and sum up to $z$. 
\end{proof}

Let us now describe the main steps of the decision procedure for the theory in Definition~\ref{def:sums}. 

\textbf{Elimination of terms in Boolean algebra with Cardinalities.} To eliminate these constraints, we introduce $k$ array variables $c_1,\ldots,c_k$ and we rewrite the Boolean algebra expressions and cardinality constraints in terms of set interpretations and summation constraints. See the appendix for further details.

As a result of this phase, we obtain a formula of the form:
\begin{align} \label{eq:sum}
\begin{split}
\psi(\overline{\sigma}) \land \bigwedge_{i = 1}^k I = \{ \i \in I | \varphi_i(\overline{c}(\i)) \} \land \overline{\sigma} = \sum \mset{ \overline{c}(\i) ; \varphi_0(\overline{c}(\i))  }
\end{split}
 \end{align}
where $\psi$ is a quantifier-free Presburger arithmetic formula and all the Boolean algebra and cardinality constraints has been translated into set interpretations and summation constraints.

\textbf{Elimination of the Set Interpretations.} The next step in the decision procedure is to eliminate the set interpretation term. However, in the form of Formula~\ref{eq:sum}, this is particularly simple. Formula~\ref{eq:sum} is equivalent to: 
\begin{align} \label{eq:interremoved}
\begin{split}
\psi(\overline{\sigma}) \land \overline{\sigma} = \sum \mset{ \overline{c}(\i) ; \bigwedge_{i = 0}^k \varphi_i(\overline{c}(\i))  }
\end{split}
\end{align}
It thus remains to remove the summation operator.

\textbf{Elimination of the Summation Operator.} The next step is to rewrite sums to a star operator introduced in \cite{piskac_linear_2008}. Given a set $A$, the set $A^*$ is defined as: 
\[
A^* = \set{ u ; \exists N \ge 0, x_1,\ldots, x_N \in A. u = \sum_{i = 1}^N x_i }
\]

\begin{proposition}[Multiset elimination] \label{lem:msetelim}
The formula 
\begin{equation}
\exists \overline{\sigma}, \overline{c}. \psi(\overline{\sigma}) \land \overline{\sigma} = \sum\limits_{n \in \N} \mset{ \overline{c}(n) ; \varphi(\overline{c}(n)) }
\end{equation}
and the formula
\begin{equation} \label{eq:star}
\exists \overline{\sigma}. \psi(\overline{\sigma}) \land \overline{\sigma} \in \set{\overline{k} ; \varphi(\overline{k})}^* 
\end{equation}
are equivalent.
\end{proposition}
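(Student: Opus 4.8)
The plan is to prove the two directions of the equivalence by unfolding the definitions of the sum over a multiset indexed by $\N$ and of the star operator, and exhibiting an explicit correspondence between witnesses. The key observation is that the multiset $\mset{ \overline{c}(n) ; \varphi(\overline{c}(n)) }$ picks out exactly those indices $n \in \N$ where the predicate $\varphi$ holds on the tuple $\overline{c}(n)$, and sums the corresponding tuples; since $\overline{c}$ is an arbitrary array of natural numbers, the multiset of \emph{values} that occur at such indices is an arbitrary finite multiset of tuples $\overline{k}$ each satisfying $\varphi(\overline{k})$. Summing such a multiset is precisely the defining operation of $\set{\overline{k} ; \varphi(\overline{k})}^*$.

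For the forward direction, suppose $\overline{\sigma}, \overline{c}$ witness the first formula, so $\psi(\overline{\sigma})$ holds and $\overline{\sigma} = \sum_{n \in \N} \mset{ \overline{c}(n) ; \varphi(\overline{c}(n)) }$. For this sum to be defined (i.e.\ finite) the set $\Set{ n \in \N ; \varphi(\overline{c}(n)) }$ must be finite; enumerate it as $n_1 < \cdots < n_N$ and set $x_j := \overline{c}(n_j)$. Then each $x_j$ satisfies $\varphi(x_j)$, and $\overline{\sigma} = \sum_{j=1}^N x_j$, so $\overline{\sigma} \in \set{\overline{k} ; \varphi(\overline{k})}^*$ with the same $\psi(\overline{\sigma})$ still holding. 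For the converse, given $\overline{\sigma}$ with $\psi(\overline{\sigma})$ and $\overline{\sigma} = \sum_{j=1}^N x_j$ for some $x_1,\ldots,x_N$ each satisfying $\varphi$, define an array $\overline{c}$ by $\overline{c}(n) := x_{n+1}$ for $0 \le n \le N-1$ and $\overline{c}(n) := \overline{d}$ for $n \ge N$, where $\overline{d}$ is any fixed tuple of natural numbers \emph{not} satisfying $\varphi$. The subtle point, and the main thing to check, is the existence of such a $\overline{d}$: if $\varphi$ is valid (satisfied by every tuple of naturals) then no such $\overline{d}$ exists and the multiset would be infinite. I would handle this by a case split: if $\varphi$ is valid, then $\set{\overline{k} ; \varphi(\overline{k})}^*$ is just $\set{\overline{k} ; \overline{k} \in \N^m}^*$ where $m$ is the arity, which is easy to describe directly (it is $\{ \overline{0} \}$ together with everything coordinatewise reachable), and we instead observe that the original multiset formula is satisfied by taking $\overline{c} \equiv \overline{0}$ except finitely often only if $\overline{\sigma}$ is itself expressible as a finite sum — but this is exactly the $N=0$ or small-$N$ degenerate case and can be absorbed. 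Alternatively, and more cleanly, note that the empty sum is always available ($N = 0$ gives $\overline{\sigma} = \overline{0}$), and for $N \ge 1$ with $\varphi$ valid we may still pad with $\overline{d} = \overline{0}$ provided $\overline{0}$ contributes $0$ to the sum, which it does; the multiset is then infinite but every tail element is $\overline{0}$, and one should check whether the paper's convention for $\sum_{n \in \N}$ already tolerates cofinitely-zero multisets.

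The main obstacle is therefore the finiteness/definedness convention for $\sum_{n \in \N} \mset{\cdot}$: the proof is essentially trivial if infinite sums of naturals are simply undefined (forcing the support to be finite), but one must be careful to state this convention and to deal with the edge case where $\varphi$ is satisfied by every tuple, in which case padding with a non-satisfying constant is impossible. I expect the clean resolution is to either (a) adopt the convention that the summation term is only well-formed when the support is finite, making the forward direction immediate and the backward direction a matter of choosing a padding tuple outside $\varphi$ (with the $\varphi$-valid case handled by the separate remark above), or (b) allow cofinitely-zero supports, using $\overline{0}$ as padding, which works uniformly since zero tuples are absorbed by summation. Either way, once the convention is pinned down, the bijection between finite index-sets-with-values and finite multisets of $\varphi$-satisfying tuples gives the equivalence directly, with $\psi(\overline{\sigma})$ carried along unchanged on both sides.
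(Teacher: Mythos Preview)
Your proposal is correct and ultimately converges with the paper's proof, but you take a detour the paper avoids. The paper adopts what you call convention (b): it allows the multiset $\mset{\overline{c}(n); \varphi(\overline{c}(n))}$ to be infinite provided the values are cofinitely $\overline{0}$. Accordingly, in the forward direction the paper does \emph{not} assume the support $\{n : \varphi(\overline{c}(n))\}$ is finite; it explicitly splits into the finite case and the infinite-but-cofinitely-zero case, in the latter observing that $\overline{\sigma}$ equals the finite sum over the nonzero indices. In the backward direction the paper simply pads with $\overline{0}$, with no case analysis on whether $\varphi(\overline{0})$ holds: if it does, the multiset is infinite but cofinitely zero and the sum is still $\overline{\sigma}$; if it does not, the multiset is finite. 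Either way the construction works uniformly, so your worry about the existence of a non-$\varphi$-satisfying padding tuple $\overline{d}$ and the attendant case split on validity of $\varphi$ are unnecessary.
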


The argument needs to be adapted from Theorem~2.4 of \cite{piskac_decision_2011} since both our index and element set are infinite. The details are given in the appendix.

The next step is to eliminate the star operator introduced in Proposition~\ref{lem:msetelim}. To do so, one could use \cite[Theorem~2.23]{piskac_decision_2011} which shows that if Formula~\ref{eq:star} is satisfiable then it also has a solution that can be written with a polynomial number of bits. We adapt this result to the case where we consider explicit integer exponents in the sets. That is we consider given a set $A$ and an integer $m \in \N$, the set $A^m$ defined as $A^m = \set{ u ; x_1,\ldots, x_m \in A. u = \sum_{i = 1}^m x_i }$. The reason to do this is that when mixing summation and other kinds constraints such as in \cite{raya_algebraic_2024}, we need to \textit{synchronise} the cardinality constraints of the combined theory with the cardinality constraints arising from the number of addends used in the sums. 

\begin{definition}
LIA with sum cardinalities, denoted $LIA^{card}$, is the theory consisting of formulas of the form $F_{0} \wedge \bigwedge_{i = 1}^n u \in \{x \mid F_i(x)\}^{x_i}$ where $F_{0}$ and $F$ are quantifier-free Presburger arithmetic formulae. 
\end{definition}

\begin{proposition} \label{prop:minimal}
$LIA^{card}$ is in $\NP$.
\end{proposition}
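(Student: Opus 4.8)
The plan is to establish a small-model property for $\lcard$: if a formula of the given shape is satisfiable, then it has a satisfying assignment described by a witness of size polynomial in the input, which an $\NP$ machine can guess and check in polynomial time. This follows the blueprint of the small-solution arguments of Piskac and Kuncak for the star operator (in particular \cite[Theorem~2.23]{piskac_decision_2011}); the new ingredient is that the number of summands is now an explicit variable $x_i$ that must be tracked throughout. The $\NP$ procedure will guess values (in binary) for the variables of $F_0$, for $u$, and, for each $i\in\{1,\dots,n\}$, the integer $x_i$ together with polynomially many distinct candidate vectors $v_{i,1},\dots,v_{i,t_i}$ and positive binary multiplicities $\mu_{i,1},\dots,\mu_{i,t_i}$; it then verifies in polynomial time that $F_0$ holds, that $F_i(v_{i,s})$ holds for all $i,s$, and that $\sum_s \mu_{i,s}=x_i$ and $\sum_s \mu_{i,s}\,v_{i,s}=u$ for every $i$. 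Soundness is immediate, since the multiset taking $\mu_{i,s}$ copies of $v_{i,s}$ witnesses $u\in\{x\mid F_i(x)\}^{x_i}$; the content is completeness.

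For completeness I would first recall that each set $\{x\mid F_i(x)\}$ is semilinear, with a representation $\bigcup_l\bigl(b_{i,l}+\mathbb{N}p_{i,l,1}+\cdots\bigr)$ in which every base and period vector has entries of polynomial bit-length (even though the number of linear pieces may be exponential). Fixing a model, for each $i$ the vector $u$ is a sum of exactly $x_i$ elements of $\{x\mid F_i(x)\}$; grouping these summands by the linear piece from which they are drawn, the contribution of the elements coming from piece $l$ equals $m_{i,l}b_{i,l}$ plus a nonnegative integer combination of the periods $p_{i,l,\cdot}$, where $m_{i,l}\ge 1$ is the number of summands taken from piece $l$ and $\sum_l m_{i,l}=x_i$.

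Next I would sparsify this decomposition. Since $u$ lies in an affine/integer-cone combination of polynomially-bit-bounded generators, the integer Carathéodory and small-support bounds for integer cones (Eisenbrand–Shmonin, and their affine variants) yield a representation of $u$ using only polynomially many of the linear pieces, each with period coefficients and multiplicity $m_{i,l}$ of polynomial bit-length; concentrating each period combination onto a single summand and padding the rest with copies of the corresponding base $b_{i,l}$, the whole multiset of $x_i$ chosen elements uses only polynomially many distinct values, each of polynomial bit-length, while $x_i=\sum_l m_{i,l}$ becomes a sum of polynomially many polynomially-bit-bounded numbers and hence is itself polynomially bit-bounded. Once this ``shape'' (which pieces, which distinct values $v_{i,s}$) is fixed, what remains is the system consisting of $F_0$ together with the linear equations $\sum_s \mu_{i,s}=x_i$ and $\sum_s \mu_{i,s}\,v_{i,s}=u$; this is a quantifier-free Presburger system of polynomial size, so it has a solution of polynomial bit-length whenever it has one, which gives the desired witness.

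The main obstacle is precisely this sparsification step and its interaction with the exact-cardinality constraint $\sum_l m_{i,l}=x_i$: the classical bounds are stated for membership in integer cones with no lower bound on, or control over, the number of generators used, whereas here a prescribed total multiplicity must be reached. The resolution is that, after the period contributions are sparsified and each concentrated on one summand, the leftover multiplicity can be absorbed by repeating the base vectors $b_{i,l}$, so that pinning $x_i$ costs only polynomially many additional bits; extra care is needed when $0\notin\{x\mid F_i(x)\}$, where one must check that the set of attainable totals is still captured by the residual Presburger constraints. A secondary point, already handled the same way in \cite{piskac_decision_2011}, is making available the polynomial bit-length bound on semilinear representations of quantifier-free Presburger formulas in the form used above.
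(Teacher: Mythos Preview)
Your approach is essentially the paper's: semilinear normal form for each $F_i$, reduce to polynomially many generators while tracking the cardinality $x_i$, assemble a single linear system joining these equations with a disjunct of $F_0$, and invoke Papadimitriou's small-solution bound. The only presentational difference is that the paper guesses the semilinear \emph{generators} $a_l,b_s$ together with a DNF disjunct $A_i x\le b_i$ of $F_i$ (and verifies $A_i a_l\le b_i$, $A_i b_s\le 0$), leaving the coefficients $\lambda_{ij},\mu_i,x,u$ as unknowns in the combined system; you instead guess concrete witness vectors $v_{i,s}$ and check $F_i(v_{i,s})$ directly. One imprecision worth fixing: Carath\'eodory/Eisenbrand--Shmonin bound the \emph{support size}, not the bit-length of the period coefficients or of $m_{i,l}$, so your intermediate claim that sparsification alone makes $x_i$ and the $v_{i,s}$ polynomially bit-bounded is not justified as stated; that bound comes only at your final step, from Papadimitriou applied to the combined Presburger system, exactly as in the paper. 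Once you route the bit-size argument through the combined system (and derive the size of the guessed $v_{i,s}$ from it), your verifier and the paper's coincide.
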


A detailed proof is given in the appendix.

\section{Conclusion}
\label{section:conclusion}
Despite the numerous works that are dedicated to the theory of arrays and its variations, it remains a challenge to provide a comprehensive classification of array theories according to the computational complexity of their satisfiability problem and their expressive power. This paper shows that even classical theories such as the combinatory array logic fragment can be optimised with respect to both metrics. An interesting extension that we leave open is to support combinatory array logic with sums and different element sorts.

\appendix
\section*{Appendix}
\addcontentsline{toc}{section}{Appendices}
\renewcommand{\thesubsection}{\Alph{subsection}}
\subsection{Elimination of Boolean algebra with Cardinalities terms}

\begin{itemize}
    \item For every newly introduced array variable $c_j$, we introduce the set interpretation:
    \[
    I = \{ \i \in I | c_j(\i) = 0 \lor c_j(\i) = 1 \} 
    \]
    this constraint is added to the set interpretation term.
    \item For every newly introduced array variable $c_j$, we rewrite the set variable $S_j$ into the following set interpretation:
    \[
    S_j := \{ i \in I | c_j(\i) = 1 \}
    \]
    which says that the indices in $S_j$ corresponds to the positions where $c_j$ is equal to one.
    \item We then substitute each Boolean algebra expression appearing in the Boolean algebra with cardinalities term of Formula~\ref{eq:fragment}, by repeatedly applying the following rewrite rules:
    \begin{align*}
    S_j^c &:= \{ \i \in I | c_j(\i) = 0 \} \\
    S_j \cup S_k &:= \{ \i \in I | c_j(\i) = 1 \lor c_k(\i) = 1 \} \\
    S_j \cap S_k &:= \{ \i \in I | c_j(\i) = 1 \land c_k(\i) = 1 \} 
    \end{align*}

    \item By the above rewriting process, each cardinality constraint $|S| = k$ is rewritten as 
    \[
    |\{ \i \in I | \varphi(\overline{c}(\i)) \}| = k
    \]
    where the variable $\overline{c}$ lists the newly introduced array variables $c_1,\ldots,c_k$. We then introduce a new array variable $x = ite(\varphi(\overline{c}), 1, 0)$, that is, $x(\i)$ is equal to one if $\varphi(\overline{c}(\i))$ holds and it is equal to $0$ otherwise. This can be encoded with the set interpretation
    \[
    I = \{ \i \in I | x(\i) = ite(\varphi(\overline{c}(\i)),1,0) \}
    \]
    One then rewrites the expression $|S| = k$ into $k = \sum_{i \in I} x(\i)$.\footnote{It is remarkable than being closed under the if-then-else operator (ite) also appears in modern presentations of Feferman-Vaught theorem, see for instance \cite[Theorem~9.6.2]{hodges_model_1993}, as the property of "being closed under gluing over a Boolean algebra". Such theoretical generalisations are also used in practice, see for instance \cite{schmid_generalized_2022}.}
\end{itemize}

\subsection{Elimination of the multiset comprehension}

\begin{proof}
\noindent $\Rightarrow)$ If (1) is satisfied, there are $\overline{\sigma}, \overline{c}$ such that $
\psi(\overline{\sigma}) \land \overline{\sigma} = \sum\limits_{n \in \N} \mset{ \overline{c}(n) ; \varphi(\overline{c}(n)) }
$. We claim that the same $\overline{\sigma}$ satisfies $\psi(\overline{\sigma}) \land \bigwedge_{i = 1}^k \overline{\sigma} \in \set{\overline{k} ; \varphi(\overline{k})}^*$. By hypothesis, $\psi(\overline{\sigma})$ is true. Moreover, $\overline{\sigma} = \mset{ \overline{c}(n) ; \varphi(\overline{c}(n)) }$ and either

\begin{itemize}
\item $\mset{ \overline{c}(n) ; \varphi(\overline{c}(n)) }$ is finite in which case $\overline{\sigma} \in \set{\overline{k} ; \varphi(\overline{k})}^*$.
\item or $\mset{ \overline{c}(n) ; \varphi(\overline{c}(n)) }$ is infinite, in which case $\overline{c}(n)$ is equal to $\overline{0}$ in all but a finite set of indices $I$ since by hypothesis the sum $\overline{\sigma}$ is finite. Then $\overline{\sigma} = \sum\limits_{n \in I} \overline{c}(n)$ and $\overline{\sigma} \in \set{\overline{k} ; \varphi(\overline{k})}^*$.
\end{itemize}

\noindent $\Leftarrow)$ If (2) is satisfied, then there is $\overline{\sigma}$ such that $\psi(\overline{\sigma}) \land \overline{\sigma} \in \set{\overline{k} ; \varphi(\overline{k})}^*$. It follows that there is a finite list of elements $\overline{k}_i$ such that $\overline{\sigma} = \sum_{i = 1}^p \overline{k}_i$. We define
\[
\overline{c}(n) = 
\begin{cases}
\overline{k}_n & \text{if } 1 \le n \le p \\
0 & \text{otherwise}
\end{cases}
\]
It is immediate that $\overline{\sigma}$ and $\overline{c}$ satisfy $\psi(\overline{\sigma}) \land \overline{\sigma} = \sum\limits_{n \in \N} \mset{ \overline{c}(n) ; \varphi(\overline{c}(n)) }$.
\end{proof}

\subsection{$\NP$ membership of $\lcard$}

\begin{proof}
Let $V_{PA}$ be a polynomial time verifier for $\lcard$. Figure~\ref{fig:cardverif} gives a verifier $V$ for $\lcard$. We show that $x \in \lcard$ if and only if there exists a polynomial-size certificate $w$ such that $V$ accepts $\langle x, w \rangle$.

\vspace{.5em}

\begin{figure*}[ht!]
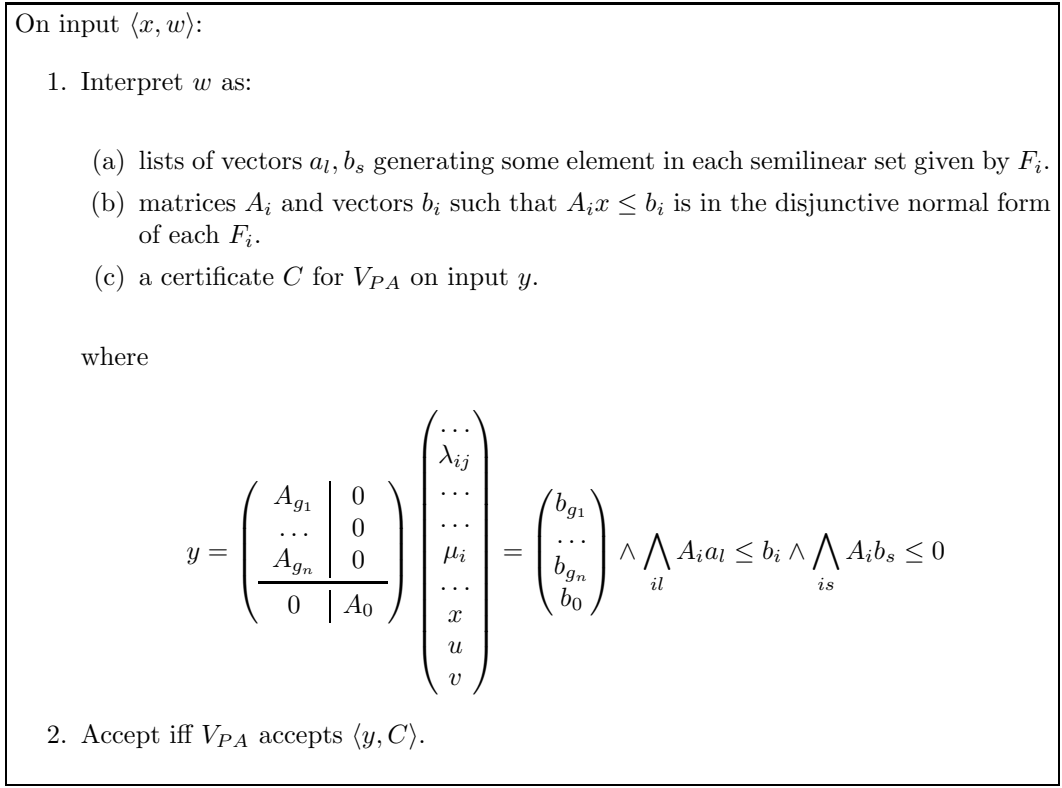

\fbox{\parbox{.95\textwidth}{
On input $\langle x, w \rangle$:

\begin{enumerate}
\setlength\itemsep{.5em}

\item Interpret $w$ as:
\vspace{1em}
\begin{enumerate}
    \item lists of vectors $a_l, b_s$ generating some element in each semilinear set given by $F_i$.
    \item matrices $A_i$ and vectors $b_i$ such that $A_ix \le b_i$ is in the disjunctive normal form of each $F_i$.
    \item a certificate $C$ for $V_{PA}$ on input $y$.
\end{enumerate}

\vspace{1em}

\noindent where 

\[ y = 
\left( \begin{array}{c|c}
   A_{g_1} & 0 \\
   \ldots  & 0 \\
   A_{g_n} & 0 \\
   \midrule
   0 & A_0 \\
\end{array}\right)
\begin{pmatrix}
\ldots \\
\lambda_{ij} \\
\ldots \\
\ldots \\
\mu_i \\
\ldots \\
x \\
u \\
v
\end{pmatrix} =
\begin{pmatrix}
b_{g_1} \\
\ldots  \\
b_{g_n} \\
b_0
\end{pmatrix}
\land 
\bigwedge_{il} A_i a_l \le b_i 
\land 
\bigwedge_{is} A_i b_s \le 0
\]

\item Accept iff $V_{PA}$ accepts $\langle y,C  \rangle$.
\end{enumerate}
}}
\caption{Verifier for $LIA^{card}$.} 
\label{fig:cardverif}
\end{figure*}

\noindent $\Rightarrow)$ If $x = F_{0} \wedge \bigwedge_{i = 1}^n u \in \{x \mid F_i(x)\}^{x_i} \in \lcard$ then we show that there is a solution that uses a polynomial number of bits in the size of $F_0$ and $F_1$.

We convert each formula $F_i$ to semilinear normal form \cite[Theorem~2.13]{piskac_decision_2011}.

\begin{lemma} \label{lem:semnf}
Let $F$ be a linear arithmetic formula of size $s$. Then there exist numbers $m, q_1, \ldots, q_m \in \mathbb{N}$ and vectors $a_i, b_{ij} \in \mathbb{N}^n$ for $1 \le j \le q_i, 1 \le i \le m$ with $\|a_i\|_1, \|b_{ij}\|_1 \le 2^{p(s)}$ with $p$  polynomial such that $F(x)$ is equivalent to the formula:
\begin{equation} \label{eq:semnf}
\exists \alpha_{11}, \ldots, \alpha_{mq_m}. \bigvee_{i = 1}^m \Big(x = a_i + \sum_{j =  1}^{q_i} \alpha_{ij} b_{ij}\Big)
\end{equation}
\end{lemma}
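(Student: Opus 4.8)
The plan is to prove Lemma~\ref{lem:semnf} along the classical lines that Presburger-definable sets are semilinear, but keeping track of the magnitude of the generators. The observation that makes the bookkeeping go through is that neither $m$ nor the $q_i$ are required to be polynomial in $s$ --- only the $\ell_1$-norms of the vectors $a_i, b_{ij}$ --- so the possibly exponential blow-up incurred by passing to a disjunctive normal form is harmless.

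First I would normalise $F$. Put $F$ in DNF; each disjunct is a conjunction of at most $s$ literals, each of which is a linear (in)equality $\langle c, x\rangle \bowtie d$ with $\bowtie \in \{\le, =, \ne\}$ or a (negated) divisibility atom $e \mid (\langle c,x\rangle - r)$, all constants of absolute value at most $2^{s}$. Over $\mathbb{Z}$ one eliminates $\langle c,x\rangle \ne d$ by $\langle c,x\rangle \le d-1 \vee \langle c,x\rangle \ge d+1$ and $\neg(e \mid t)$ by $\bigvee_{r=1}^{e-1} (e \mid (t - r))$, and then redistributes back to DNF. Each surviving divisibility atom $e \mid (\langle c,x\rangle - r)$ is replaced by a fresh integer variable $y$ and the equation $\langle c,x\rangle - r = e\,y$. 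After this preprocessing every disjunct is the image, under the projection onto the original $n$ coordinates, of the set of integer points of a system $A'x' \le b'$ (equalities split into pairs of inequalities) in at most $n + s$ variables, with $O(s)$ rows and all entries of $A', b'$ bounded in absolute value by $2^{O(s)}$.

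The core step is then: for a rational polyhedron $Q = \{x' : A'x' \le b'\}$ the set $Q \cap \mathbb{Z}^{m}$ is a finite union of linear sets $a + \sum_j \alpha_j b_j$ ($\alpha_j \in \mathbb{N}$) in which the $b_j$ may be taken to be the elements of a Hilbert basis of the recession cone $\mathrm{rec}(Q) = \{x' : A'x' \le 0\}$ and the base points $a$ range over an explicitly bounded finite set. The magnitude bounds accrue in two layers. First, the vertices of $Q$ and the primitive integral generators of the extreme rays of $\mathrm{rec}(Q)$ are, by Cramer's rule, quotients of $m \times m$ subdeterminants of $[A'\mid b']$, hence of $\ell_1$-norm at most $m!\,(2^{O(s)})^{m} \le 2^{p_0(s)}$ for a polynomial $p_0$. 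Second, a Hilbert basis of a pointed cone generated by integer vectors of $\ell_\infty$-norm at most $R$ consists, after a triangulation using only the extreme rays, of vectors of norm polynomial in $m$ and $R$, and a covering set of base points can be confined to the vertices of $Q$ together with a fundamental parallelepiped of that Hilbert basis; these are bounds of exactly the kind established in \cite{papadimitriou_complexity_1981} for the minimal solutions of $A'x' = b', x' \ge 0$ and of the associated homogeneous system. Composing the two layers, all $a$'s and $b_j$'s have $\ell_1$-norm at most $2^{p(s)}$ for a suitable polynomial $p$; projecting away the auxiliary $y$-coordinates only deletes entries from these vectors, and unioning over the disjuncts produces Formula~\ref{eq:semnf}. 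This is essentially \cite[Theorem~2.13]{piskac_decision_2011}, which I would reprove in this form to make the dependence on $s$ explicit.

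I expect the main obstacle to be the second layer: obtaining the explicit normal form with a \emph{bounded} set of integral base points $a_i$. The periodic part is just the standard recession-cone description, but the naive decomposition $q = (v' + \sum\{\mu_k\}\rho_k) + \sum \lfloor\mu_k\rfloor\rho_k$ yields a base point that lies in a bounded region yet need not be integral; repairing this requires replacing the extreme rays by a Hilbert basis of the recession cone together with a more careful fundamental-domain argument --- an effective integer-Carathéodory statement, which is precisely where the subdeterminant and minimal-solution bounds of \cite{papadimitriou_complexity_1981} are invoked. The remaining work is routine composition of polynomially many exponential bounds.
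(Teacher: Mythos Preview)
The paper does not actually prove this lemma: it is stated with an explicit pointer to \cite[Theorem~2.13]{piskac_decision_2011} and then used as a black box inside the proof of Proposition~\ref{prop:minimal}. Your proposal goes further and sketches a self-contained argument along the standard route --- DNF normalisation, then effective semilinearity of the integer points of each resulting polyhedron via Cramer-type subdeterminant bounds and a Hilbert-basis description of the recession cone. That outline is sound, and your key reading of the statement is correct: only the norms $\|a_i\|_1$, $\|b_{ij}\|_1$ are constrained, not $m$ or the $q_i$, so the possibly exponential DNF blow-up is indeed harmless here (the later Lemma~\ref{lem:polsum} is what recovers a polynomial-size witness). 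The obstacle you anticipate --- obtaining bounded \emph{integral} base points rather than the rational ones given by naive rounding against the extreme rays --- is the genuine technical step, and appealing to the minimal-solution bounds of \cite{papadimitriou_complexity_1981} for the Hilbert basis and the fundamental-domain shifts is the standard way to close it. One minor gap: the recession cone need not be pointed, so you should also split off its lineality space before triangulating; this is routine and does not affect the polynomial exponent.
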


Next, we eliminate the star operator, as in \cite[Proposition~2]{lugiez_multitrees_2002} and \cite[Theorem~2.14]{piskac_decision_2011}.

\begin{lemma} \label{lem:starelim}
Let $F$ be a quantifier-free linear integer arithmetic formula whose semilinear normal form is formula \ref{eq:semnf}.
Then $u \in \set{ y ; F(y) }^x$ is equivalent to
\[
\exists \overline{\mu}, \overline{\lambda}. u = \sum_{i = 1}^q \Bigg(\mu_i a_i + \sum_{j = 1}^{q_i} \lambda_{ij} b_{ij}\Bigg) \land \bigwedge_{i = 1}^q \Bigg(\mu_i = 0 \implies \sum_{j = 1}^{q_i} \lambda_{ij} = 0\Bigg) \land x = \sum_{i = 1}^q \mu_i
\]
\end{lemma}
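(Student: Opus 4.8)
The plan is to prove the two directions of the equivalence directly, following the pattern of \cite[Proposition~2]{lugiez_multitrees_2002} and \cite[Theorem~2.14]{piskac_decision_2011} but carrying the explicit exponent $x$ along through the extra conjunct $x = \sum_{i=1}^q \mu_i$ (here $q = m$, the number of disjuncts in the semilinear normal form of Lemma~\ref{lem:semnf}). Unfolding the definition of the exponent, $u \in \set{ y ; F(y) }^x$ means $u = \sum_{l=1}^x y_l$ for some $y_1,\dots,y_x$ each satisfying $F$; and by Lemma~\ref{lem:semnf}, an element $y$ satisfies $F$ exactly when $y = a_i + \sum_{j=1}^{q_i} \alpha_{ij} b_{ij}$ for some disjunct index $i \in \set{1,\dots,q}$ and some $\alpha_{ij} \in \N$. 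All the work is then a regrouping of finite sums.

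For the forward direction I would, for each $l \in \set{1,\dots,x}$, fix a disjunct index $\iota(l)$ and coefficients $\alpha^{(l)}_{\iota(l),j} \in \N$ realising $y_l$, and then set $\mu_i := |\set{ l ; \iota(l) = i }|$ and $\lambda_{ij} := \sum_{l : \iota(l) = i} \alpha^{(l)}_{ij}$. Regrouping $u = \sum_l y_l$ by disjunct index gives $u = \sum_{i=1}^q \big( \mu_i a_i + \sum_{j=1}^{q_i} \lambda_{ij} b_{ij} \big)$; the count identity $\sum_i \mu_i = x$ is immediate since each $l$ falls under exactly one index; and if $\mu_i = 0$ then no $l$ has $\iota(l) = i$, so every $\lambda_{ij}$ vanishes and in particular $\sum_j \lambda_{ij} = 0$. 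Hence $\overline\mu, \overline\lambda$ witness the right-hand side.

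For the converse, given $\mu_i, \lambda_{ij} \in \N$ satisfying the three conjuncts, I would exhibit $x$ elements of $\set{ y ; F(y) }$ summing to $u$: for each $i$ with $\mu_i \ge 1$, take one element $a_i + \sum_j \lambda_{ij} b_{ij}$ (which satisfies $F$ via disjunct $i$) together with $\mu_i - 1$ copies of $a_i$ (each satisfying $F$ via disjunct $i$ with all multipliers zero); for $i$ with $\mu_i = 0$, take nothing. This produces $\sum_i \mu_i = x$ elements whose total is $\sum_{i : \mu_i \ge 1} \big( \mu_i a_i + \sum_j \lambda_{ij} b_{ij} \big)$, which equals $\sum_{i=1}^q \big( \mu_i a_i + \sum_j \lambda_{ij} b_{ij} \big) = u$ precisely because the conjunct $\mu_i = 0 \Rightarrow \sum_j \lambda_{ij} = 0$, together with $\lambda_{ij} \ge 0$, forces $\lambda_{ij} = 0$ whenever $\mu_i = 0$, so dropping those indices loses nothing. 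The degenerate case $x = 0$ is covered automatically: then all $\mu_i = 0$, hence all $\lambda_{ij} = 0$, hence $u = 0$, realised by the empty sum.

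The only subtle point --- and where I expect the argument to need care --- is the treatment of the implication conjunct $\mu_i = 0 \Rightarrow \sum_j \lambda_{ij} = 0$. It is exactly the condition ensuring that any disjunct contributing period vectors $b_{ij}$ to $u$ is instantiated at least once, so that in the converse direction those contributions can be attached to a genuine element of $\set{ y ; F(y) }$ rather than being orphaned. Everything else --- the regrouping in one direction, the ``one representative carries all the $\lambda$'s'' construction in the other, and the bookkeeping of the new conjunct $x = \sum_i \mu_i$ --- is routine, which is why this lemma is essentially the exponent-annotated version of the star-elimination results of \cite{lugiez_multitrees_2002} and \cite{piskac_decision_2011}.
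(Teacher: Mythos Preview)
Your argument is correct and is precisely the standard two-direction regrouping argument behind \cite[Proposition~2]{lugiez_multitrees_2002} and \cite[Theorem~2.14]{piskac_decision_2011}, with the extra bookkeeping conjunct $x=\sum_i\mu_i$ handled exactly as one would expect. The paper itself does not spell out a proof of this lemma at all---it simply states the result and points to those two references---so your write-up is in fact more detailed than what appears in the paper, while following the same approach.
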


We express the resulting vector $u$ with polynomially many generators \cite[Theorem~2.20]{piskac_decision_2011}.

\begin{lemma}[Polynomially many generators for sums] \label{lem:polsum}
Let $F$ be a quantifier-free linear integer arithmetic formula of size $s$ whose semilinear normal form is formula \ref{eq:semnf}. Then $u \in \set{ y ; F(y) }^x$ is equisatisfiable with
\[
\exists \lambda_{ij}, \mu_i.u = \sum_{i \in I_0} \Big( a_i + \sum_{(i,j) \in J} \lambda_{ij} b_{ij} \Big) + \sum_{i \in I_1} \mu_i a_i \land x = |I_0| + \sum_{i \in I_1} \mu_i
\]
for some $I_0, I_1 \subseteq \set{ 1, \ldots, q}$, $J \subseteq \cup_{i = 1}^q \set{ (i,1), \ldots, (i,q_i) }$, $|I_0| \le |J| \le q(s)$, $|I_1| \le q(s)$ and $q$ is a polynomial.
\end{lemma}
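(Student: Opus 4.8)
This is the ``explicit exponent'' strengthening of \cite[Theorem~2.20]{piskac_decision_2011}, and the plan is to mimic its proof while carrying the exponent $x$ along as one additional coordinate. Write $d$ for the dimension of $u$ (polynomial in $s$) and recall from Lemma~\ref{lem:semnf} that the generators of the semilinear normal form of $F$ satisfy $\|a_i\|_1,\|b_{ij}\|_1 \le 2^{p(s)}$, although their number $q$ may be exponential in $s$. By Lemma~\ref{lem:starelim}, $u \in \set{y ; F(y)}^x$ holds iff there are nonnegative integers $\mu_i,\lambda_{ij}$ with $u = \sum_{i=1}^{q}\bigl(\mu_i a_i + \sum_j \lambda_{ij}b_{ij}\bigr)$, with $x = \sum_{i=1}^{q}\mu_i$, and with $\mu_i = 0 \Rightarrow \sum_j \lambda_{ij}=0$ for every $i$. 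I would rephrase this in $\N^{d+1}$: the identity $(u,x) = \sum_i \mu_i(a_i,1) + \sum_{ij}\lambda_{ij}(b_{ij},0)$ exhibits $(u,x)$ as a member of the integer cone generated by the vectors $(a_i,1)$ and $(b_{ij},0)$, under the same guard attaching the period generators only to active indices $i$.

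The next step is sparsification. I would apply a Carath\'eodory-type bound for integer cones (Eisenbrand and Shmonin): a vector in the integer cone of finitely many $\N^{d+1}$-vectors of $1$-norm at most $M$ already lies in the integer cone of $O\bigl((d+1)\log((d+1)M)\bigr)$ of them, with coefficients of polynomial bit-length. Since $M \le 2^{p(s)}$ and $d \le s$ here, this number, call it $q(s)$, is polynomial in $s$ and --- crucially --- independent of $q$. So $(u,x)$ is an integer combination of at most $q(s)$ of the generators. Now I would peel off the multiplicities: for each active index $i$ keep one copy of $(a_i,1)$ that absorbs all the period generators $(b_{ij},0)$ used with that index, and record the remaining $\mu_i-1$ copies as a pure $(a_i,1)$-contribution; collecting the single copies that carry a period into $I_0$, the residual pure copies into $I_1$, and the period vectors that actually occur into $J$, one reaches exactly $u = \sum_{i\in I_0}\bigl(a_i + \sum_{(i,j)\in J}\lambda_{ij}b_{ij}\bigr) + \sum_{i\in I_1}\mu_i a_i$ together with $x = |I_0| + \sum_{i\in I_1}\mu_i$, where $|I_0|\le|J|\le q(s)$ (each $I_0$-index owns at least one member of $J$) and $|I_1|\le q(s)$. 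The converse direction is immediate: expanding each of the $|I_0| + \sum_{i\in I_1}\mu_i$ copies turns a model of the displayed formula into a witness of $u\in\set{y ; F(y)}^x$ with exponent $x$, so the two formulas are equisatisfiable.

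The step I expect to be the main obstacle is making the sparsification respect \emph{both} the guard $\mu_i = 0 \Rightarrow \sum_j\lambda_{ij}=0$ \emph{and} the exponent equation $x = \sum_i\mu_i$ simultaneously, while keeping the generator count independent of the (possibly exponentially many) linear sets. Concretely, one has to check that running the argument of \cite[Theorem~2.20]{piskac_decision_2011} in the extended dimension $d+1$ still yields a combination with the structural shape used above --- period generators attached only to active indices, $I_0$-indices of multiplicity one, and the last coordinate reproducing $x$ verbatim --- rather than merely some small-support combination of the raw generators; the remaining accounting is as in \cite{piskac_decision_2011}.
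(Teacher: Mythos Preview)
The paper does not give its own proof of this lemma: it states the result with a pointer to \cite[Theorem~2.20]{piskac_decision_2011} and then uses it as a black box inside the proof of Proposition~\ref{prop:minimal}. Your plan to rerun that theorem's argument with the exponent $x$ appended as one extra coordinate in $\N^{d+1}$, so that an integer Carath\'eodory bound controls $u$ and $x$ simultaneously, is precisely the adaptation the citation is meant to cover, and the guard-preservation obstacle you isolate is the one nontrivial step --- resolved in the cited argument by sparsifying the period part first to fix $J$ and $I_0$, reserving one copy of $a_i$ for each surviving period index, and only then sparsifying the remaining base-vector combination in $\N^{d+1}$ to produce $I_1$ together with $x=|I_0|+\sum_{i\in I_1}\mu_i$.
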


We write the equation 
\[
u = \sum_{i \in I_0} \Big( a_i + \sum_{(i,j) \in J} \lambda_{ij} b_{ij} \Big) + \sum_{i \in I_1} \mu_i a_i \land x = |I_0| + \sum_{i \in I_1} \mu_i
\]
as a system $A_g x_g = b_g$ of the form


\[
\left( 
\begin{array}{ccccccccccc}
\ldots & a_i    & \ldots & \ldots & b_{ij} & \ldots & \ldots & a_i    & \ldots & 0 & - I_k  \\
0      & \ldots & \ldots & \ldots & \ldots & 0      & 1      & \ldots & 1      & -1 & 0    \\
\end{array}
\right)
\begin{pmatrix}
\ldots \\
\lambda_{ij} \\
\ldots \\
\ldots \\
\mu_i \\ 
\ldots \\
x \\
u 
\end{pmatrix} =
\begin{pmatrix}
 \ldots \\
 - a_i \\
 \ldots \\
0 \\
\ldots \\
0 \\
-|I_0|
\end{pmatrix}
\]
Since $u$ is also a solution of $F_0$ it will further satisfy some system $A_0 w = b_0$ corresponding to one of the terms of the disjunctive normal form of $F_0$ and where we can assume that the unknown $u$ appears in the first rows of $w = (u, v)$.  As a result, we obtain a combined system.
\[
\left( \begin{array}{c|c}
   A_{g_1} & 0      \\
   \ldots  & \ldots \\
   A_{g_n} & 0      \\
   \midrule
   0 & A_0 \\
\end{array}\right)
\begin{pmatrix}
\ldots \\
\lambda_{ij} \\
\ldots \\
\ldots \\
\mu_i \\
\ldots \\
x \\
u \\
v
\end{pmatrix} =
\begin{pmatrix}
b_{g_1} \\
\ldots  \\
b_{g_n} \\
b_0
\end{pmatrix}
\]
This system has a polynomial number of rows, columns and uses polynomially many bits for its maximum absolute value.  Thus, it is guaranteed to have a solution that uses polynomially many bits by  the following theorem from \cite[page 767]{papadimitriou_complexity_1981}.

\begin{lemma}
Let A be an $m \times n$ integer matrix and $b$ a $m$-vector, both with entries from $[-a . . a]$. Then the system $A x=b$ has a solution in $\mathbb{N}^{n}$ if and only if it has a solution in $[0 . . M]^{n}$ where $M=n(m a)^{2 m+1}$.
\end{lemma}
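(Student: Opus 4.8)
The plan is to prove both directions, the forward one being immediate and the reverse one carrying all the content. If $Ax=b$ has a solution in $[0..M]^n$, then that very vector already lies in $\mathbb{N}^n$, so only the converse needs work: given \emph{some} solution $x \in \mathbb{N}^n$, I must produce one whose every coordinate is at most $M$. I would obtain it by taking a solution of minimal weight and arguing that minimality forces it into the box. Concretely, among all solutions in $\mathbb{N}^n$ I choose $x$ minimizing $\sum_{j=1}^n x_j$; such a minimizer exists because the weights are nonnegative integers. It then suffices to bound $N := \sum_j x_j$, since each coordinate satisfies $x_j \le N$.

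To bound $N$, I would read $Ax=b$ as a statement about a multiset of columns. Writing $A_1,\dots,A_n$ for the columns of $A$, the vector $Ax$ is the sum of the multiset containing each $A_j$ with multiplicity $x_j$; this multiset has $N$ members, each with entries in $[-a..a]$, and its total sum is $b$. Fixing an ordering $w_1,\dots,w_N$ of the multiset, I form the partial sums $v_0 = 0,\ v_t = \sum_{s \le t} w_s$, so that $v_N = b$. If two partial sums coincide, say $v_s = v_t$ with $s<t$, then the block $w_{s+1},\dots,w_t$ sums to $0$; recording which original columns occur in that block yields $y \in \mathbb{N}^n$ with $0 \le y \le x$ componentwise, $y \neq 0$ and $Ay = 0$. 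But then $x-y \in \mathbb{N}^n$ still satisfies $A(x-y) = b$ while having strictly smaller weight, contradicting minimality. Hence the $N+1$ partial sums $v_0,\dots,v_N$ are pairwise distinct, so $N$ is at most the number of lattice points in any region known to contain all the $v_t$.

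Everything therefore reduces to confining the partial sums to a small box, and this is the step I expect to be the real obstacle: under an arbitrary ordering a coordinate of $v_t$ is only bounded by $ta \le Na$, which is circular. The remedy is a Steinitz-type rearrangement. Because each $w_s$ has $\ell_\infty$-norm at most $a$ and the total $b$ satisfies $\|b\|_\infty \le a$, one can reorder the multiset so that every partial sum stays within $\ell_\infty$-distance $O(ma)$ of the segment from $0$ to $b$, hence inside a box of side $O(ma)$ near the origin. I would either invoke the Steinitz lemma directly or, to keep the argument self-contained, prove the weaker bound actually needed by a greedy rearrangement that, whenever some coordinate of the running sum is pushed too far in one direction, next appends a vector pulling it back; such a vector must remain because the still-unused vectors sum to something small.

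Granting the confinement, a box $[-c\,ma .. c\,ma]^m$ holds $(2c\,ma+1)^m$ lattice points, so $N \le (2c\,ma+1)^m$ and consequently every $x_j \le N \le (2c\,ma+1)^m$. It then remains a routine estimate to check that this bound is dominated by $M = n(ma)^{2m+1}$: the exponent $2m+1$ and the factor $n$ leave ample slack to absorb the constant $c$ and lower-order terms, which is presumably why the statement phrases its bound in this convenient, non-tight form. Thus the minimal solution lies in $[0..M]^n$, completing the nontrivial direction.
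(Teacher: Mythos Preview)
The paper does not prove this lemma at all: it is quoted verbatim from Papadimitriou's 1981 paper and used as a black box inside the proof of Proposition~\ref{prop:minimal}. So there is no ``paper's own proof'' to compare against.

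That said, your sketch is essentially Papadimitriou's original argument and is correct in outline. The three ingredients---take a solution of minimal $\ell_1$-weight, read it as a walk of $N=\sum_j x_j$ column-steps ending at $b$, and observe that any repeated partial sum yields a nonzero $y\le x$ with $Ay=0$ (contradicting minimality)---are exactly the bones of his proof. The only substantive step, as you identify, is confining the partial sums to a box of side $O(ma)$; Papadimitriou does this with an explicit greedy ordering rather than by citing the Steinitz lemma, but invoking Steinitz (or Grinberg--Sevast'yanov, which gives constant $m$ in $\ell_\infty$) is a perfectly valid and arguably cleaner substitute. One small caution: the final inequality $(2c\,ma+1)^m \le n(ma)^{2m+1}$ is not entirely automatic for very small parameters (e.g.\ $m=a=1$ and tiny $n$), so ``routine estimate'' hides a short case check; but the minimal solution in those degenerate cases is easily seen to be tiny directly, so there is no real gap.
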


Moreover, since  all the generators chosen for $F_i$ lie in the same linear subset they satisfy $A_i a_l \le b_i$ and $A_i b_s \le 0$ for some system $A_ix \le b_i$ in the disjunctive normal form of each $F_i$. 

The resulting formula has polynomial-size in the size of $F_0$ and $F_1$, thus there exists a polynomial-size certificate $C$ such that $V_{PA}$ accepts $\langle y, C \rangle$. It follows that $V_{\lcard}$ accepts $\langle x, \langle \{a_l\}, \{b_s\}, \{A_i\}, \{b_i\}, C \rangle \rangle$.

\vspace{.5em}

\noindent $\Leftarrow)$ If $V_{\lcard}$ accepts $\langle x, w \rangle$ then there exists a polynomial-size certificate $C$ such that $V_{PA}$ accepts $\langle y, C \rangle$ and thus $y$ is satisfiable. This means that for some vectors $a_l$ satisfying $F_i$ and some vectors $b_s$ satisfying the homogeneous part of $F_i$ it holds that
\[
u = \sum_{l \in I_0} \Big( a_l + \sum_{(l,s) \in J} \lambda_{ls} b_{ls} \Big) + \sum_{l \in I_1} \mu_l a_l \land x_i = |I_0| + \sum_{l \in I_1} \mu_l
\]
\noindent and furthermore $(u,v)$ satisfies $F_0$. It follows that $F_{0} \wedge \bigwedge_{i = 1}^n u \in \{x \mid F_i(x)\}^{x_i}$ is satisfied by such $u$ and $v$. Thus, $x \in \lcard$.
\end{proof}

\end{document}